\pgfplotsset{compat=1.3}
\newcommand{\m}[1]{\ensuremath{\mathsf{#1}}}
\newcommand{\pid}[1]{\m{#1}}
\newcommand{\nil}{\boldsymbol 0}
\newcommand{\code}[1]{\texttt{\upshape #1}}
\newcommand{\com}[2]{#1\;\code{-\hspace{-0.3mm}>}\;#2}
\newcommand{\gencom}{\com{\pid p.e}{\pid q}}
\newcommand{\sel}[3]{\com{#1}{#2 [#3]}}
\newcommand{\gensel}{\sel{\pid p}{\pid q}{\ell}}
\newcommand{\lleft}{\textsc{l}}
\newcommand{\lright}{\textsc{r}}
\newcommand{\cond}[3]{\m{if}\, #1 \, \m{then} \, #2 \, \m{else} \, #3}
\newcommand{\gencond}{\cond{\pid p.e}{C_1}{C_2}}
\newcommand{\rec}[3]{\m{def} \, #1  =  #2 \, \m{in} \, #3}
\newcommand{\actor}[2]{#1 \triangleright #2}
\newcommand{\bsend}[2]{{#1}!{#2}}
\newcommand{\brecv}[1]{#1?}
\newcommand{\parp}{\, \boldsymbol{|} \, }
\newcommand{\bsel}[2]{{#1}\oplus#2}
\newcommand{\bbranch}[2]{{#1}\&\{{#2}\}}
\newcommand{\tuple}[1]{\ensuremath{\langle{#1}\rangle}}
\title{Implementing choreography extraction}
\author{Lu\'\i s Cruz-Filipe \and
  Fabrizio Montesi \and
  Larisa Safina}
\institute{University of Southern Denmark}
\begin{document}

\maketitle

\begin{abstract}
Choreography extraction deals with the generation of a choreography (a global description of communication behaviour)
from a set of local process behaviours.
In this work, we implement a previously proposed theory for extraction and show that, in spite of its theoretical exponential complexity,
it is usable in practice.
We discuss the data structures needed for an efficient implementation, introduce some
optimizations, and perform a systematic practical evaluation.
\end{abstract}

\section{Introduction}

Choreographies are definitions of the intended coordination that a connected system should enact,
typically given in terms of communications among concurrent processes and the composition of such
communications in larger structures~\cite{bpmn,ourTCSpaper,wscdl}.
Given a choreography, one can translate it automatically to the local behaviours of each process
that the choreography describes~\cite{BBO12,BZ07,CHY12,QZCY07}.
This translation is usually called \emph{projection}.

Projection has different uses in software development.
We give three representative examples.
In the EU project CHOReVOLUTION, choreographies are abstract specifications of communications that
are projected to skeleton implementations of processes, which the programmer can later
refine~\cite{AIT18}.
In multiparty session types, choreographies are types that are projected to local specifications of communication behaviour, which are then used for verification or monitoring purposes \cite{HYC16,NBY17}.
In choreographic programming, choreographies are high-level programs that are projected to an executable program for each process, with the guarantee that these programs will communicate and pre-/post-process the communicated data as expected \cite{M13:phd,ourTCSpaper}.

In general, programmers might need to edit the code generated by projection.
For example, they might configure, optimise, or add features to the local code of some process; or they might have to integrate legacy code written in technologies without support for choreographies. In these cases, we can support developers with the inverse operation of projection: \emph{choreography extraction}, the generation of a choreography from a set of process behaviours \cite{CMS18,ourFOSSACSpaper,LT12,LTY15}. Process behaviours are typically represented in terms of a process calculus or finite-state machines.

Extraction requires predicting how concurrent processes can communicate with each other.
Approaching this problem with brute force leads to the typical explosion of cases for static analysis of concurrent programs \cite{O18}.
For small examples, even algorithms of super-factorial worst-case complexity terminate in reasonable time \cite{LTY15}.
However, a more thorough and systematic practical evaluation of extraction is still missing. In general, the practical limits of current techniques for extraction are still largely unexplored and unclear.
The main reason is that there is still no implementation of the most efficient algorithm for extraction known to date \cite{ourFOSSACSpaper} (exponential worst-case complexity).


\paragraph{Contributions.}
We present the first implementation and systematic empirical evaluation of the extraction algorithm presented in \cite{ourFOSSACSpaper}.

Designing an implementation of \cite{ourFOSSACSpaper} required choosing adequate data structures, optimising substeps, and proving all these choices correct. All these were instrumental in achieving an implementation that successfully manages our test suite in reasonable time.


We also devise a simple static analysis that partitions a network into independent sub-networks,
which allows for running our extraction algorithm in parallel over these sub-networks.

We investigated different heuristics for deciding which action(s) should be picked at each step of the symbolic execution of a network, and compared them to a baseline that simply picks actions at random.


Finally, we move to the first systematic empirical evaluation of extraction.
In addition to hand-crafted tests taken from previous works \cite{LTY15} (which produce the expected results), we have built an automated test suite that consists of two main parts. In the first part, we tested our extraction implementation on networks that are direct projections of choreographies. We used 1050 choreographies, randomly generated with different parameters (size, number of processes, number of procedures, and branchings).
In the second part, we devised an automatic tool that simulates the typical changes (both correct and incorrect) that are introduced when a programmer edits a local process program, and then tried to extract choreographies from the edited networks. This provides information on how quickly our program fails for unextractable networks.
We believe that our test suite is comprehensive enough to capture most situations of practical relevance, and is thus a useful reference also for the future design and implementations of new algorithms for extraction.


\section{Background}

We summarise the algorithm in~\cite{ourFOSSACSpaper} and the
relevant theoretical background.
For a formal treatment of our choreography calculus (including statements and proofs of the most
relevant properties) the reader is referred to~\cite{ourTCSpaper}.

\subsection{Choreographies and networks}

\paragraph{Choreographies.}
We work in a minimalistic choreography language that is a simple variant of that
in~\cite{ourTCSpaper}.
Choreographies are parameterized on sets of process names (hereafter ranged over by \pid{p},
\pid{q}, \ldots), expressions ($e$, $e'$, \ldots), labels ($\ell$, $\ell'$, \ldots) and procedure
names ($X$, $Y$, $X'$, $X_1$, \ldots).
We assume these sets to be fixed; they are immaterial for our presentation.

A choreography is a pair \tuple{C,\{X_i=C_i\}_{i\in I}}, where $I$ is a finite set of indices
and each $X_i$ is a recursion variable.
In examples, we often write choreographies as
$\rec{X_1}{C_1}{\rec{\ldots}{\ldots}{\rec{X_n}{C_n}C}}$, and call $C$ the \emph{main body} of the
choreography.
Each $C_i$ and $C$ are \emph{choreography bodies}, defined inductively as
\[C ::= \nil \mid X \mid \gencom; C \mid \gensel; C \mid \gencond\]
where: $\nil$ is the terminated choreography; $X$ is a \emph{procedure call}; $\gencom;C$ is a
choreography where $\pid p$ can evaluate expression $e$ and send the result to $\pid q$, and the
system continues as $C$; $\gensel;C$ is a choreography where $\pid p$ can send label $\ell$ to
$\pid q$, and the system continues as $C$; and $\gencond$ is a choreography where $\pid p$ evalutes
the Boolean expression $e$ and the system continues as $C_1$, if the result is \m{true}, and as
$C_2$, otherwise.

We assume choreographies to be well-formed: there are no self-communications and all procedures are defined ($C$ and each $C_i$ only use procedure names
in $\{X_i\}_{i\in I}$).
Furthermore, procedure calls are \emph{guarded}: each $C_i$ cannot be a single procedure call.
(We do allow $C$ to be a procedure call, though.)

Value communications ($\gencom$) and label selections ($\gensel$) differ in that the latter are
meant to communicate local choices (from conditions) rather than actual values.
For the purposes of this work, their distinction is immaterial.

Procedure calls can be executed by replacing them with their definition.
Furthermore, choreographies support out-of-order execution: any action can be executed as long as it
is not preceded by a conditional or another action involving any of its processes.
For example, choreography $\com{\pid p.e}{\pid q};\com{\pid r.e'}{\pid s};C$ can execute both
$\gencom$ (reducing to $\com{\pid r.e'}{\pid s};C$) or $\com{\pid r.e'}{\pid s}$ (reducing to
$\com{\pid p.e}{\pid q};C$).
A formal description of the semantics of choreographies can be found in~\cite{ourTCSpaper}.

\paragraph{Networks.}
The local counterpart to choreographies are networks, which are parallel compositions of process
behaviours.
Networks range over sets of process names, expressions, labels and procedure names as
choreographies.
A network is a map from a finite set of process names to pairs \tuple{B,\{X_i=B_i\}_{i\in I}}, where
$B$ and each $B_i$ are \emph{behaviours}, generated by the grammar
\[B ::= \nil \mid X \mid \bsend{\pid q}{e};B \mid \brecv{\pid p};B \mid \bsel{\pid q}\ell;B \mid
\bbranch{\pid p}{\ell_1:B_1,\ldots,\ell_n:B_n} \mid \cond{e}{B_1}{B_2}\]
which are the local counterparts to choreography actions: the process executing
$\bsend{\pid q}{e};B$ evaluates expression $e$, sends the result to $\pid q$, and continues as $B$;
$\bsel{\pid q}\ell;B$ is similar; their dual actions are $\brecv{\pid p};B$ (receiving a value from
$\pid p$ and continuing as $B$) and $\bbranch{\pid p}{\ell_1:B_1,\ldots,\ell_n:B_n}$ (receiving a
label from $\pid p$ and continuing as the corresponding behaviour).
The remaining syntactic categories are as in choreographies, and we call $B$ the
\emph{main behaviour} of each process.

As for choreographies, we assume some well-formedness conditions: processes do not attempt to
communicate either with themselves or to processes that are not in the network, and all procedure
calls at each process refer to procedures defined in that process.
However, we do \emph{not} require procedure calls to be guarded.
The formal semantics can also be found in~\cite{ourTCSpaper}.

\paragraph{EndPoint Projection.}
A choreography satisfying some syntactic conditions can be projected into a network by means of a
procedure called \emph{EndPoint Projection} (EPP).
The EPP of $C$ is a network containing, for each process in $C$, a behaviour obtained
by selecting the actions of $C$ that involve that process.
The EPP Theorem states that a choreography and its EPP are semantically equivalent.

\subsection{Extraction}

The extraction algorithm from~\cite{ourFOSSACSpaper} relies on symbolically executing a
network.

\begin{definition}
  \label{defn:extr-graph}
  Let $N$ be a network.
  A \emph{Symbolic Execution Graph (SEG)} for $N$ is a directed graph whose vertices are networks,
  such that:
  (i)~$N$ is a node;
  (ii)~all leaves are terminated networks;
  (iii)~if a node $N_1$ has one outgoing edge, then that edge is labelled with a communication
    action $\gencom$ or $\gensel$ ending at a node $N_2$, and $N_1$ can reduce to $N_2$ by executing
    the corresponding communication from $\pid p$ to $\pid q$;
  (iv)~if a node $N_1$ has two outgoing edges, then those edges are labelled $\pid p:\m{then}$ and
    $\pid p:\m{else}$, they target $N_2$ and $N_3$, respectively, $\pid p$'s behaviour in $N_1$
    begins with a conditional over $e$, and $N_1$ reduces to $N_2$ if $e$ evaluates to $\m{true}$
    and to $N_3$ if $e$ evaluates to $\m{false}$.
\end{definition}
A SEG contains a possible evolution of $N$ by scheduling the order of its actions.
In order to keep the SEG finite, we only allow unfolding of procedure calls when they occur at the
head of a process involved in the reduction labeling the edge.

For systems with infinite behaviour, every SEG contains cycles (which we also call loops).
To avoid livelocks, every process must reduce inside every cycle in a SEG.
This is controled by \emph{marking} procedure calls in networks: in every network, all procedure
calls in the main behaviour of each process are annotated with $\circ$ (unmarked) or $\bullet$
(marked).
In the initial network, all calls are unmarked; if there is an edge from $N_1$ to $N_2$ and the
corresponding transition requires unfolding a procedure call, then all the procedure calls
introduced by the unfolding are marked in $N_2$.
If this would result in all procedure calls in $N_2$ to be marked, then the marking is
\emph{erased}, i.e.~all procedure calls become unmarked.
(Note that a SEG may now contain multiple occurrences of the same network, but with different
markings.)
A loop is \emph{valid} if it contains a node where every procedure call is annotated with $\circ$.
Note that a loop may still contain processes that do not reduce, as long as they have finite
behaviour (i.e., they do not contain procedure calls).
A SEG is valid if all its loops are valid.

\begin{definition}
  \label{def:extr}
  The choreography extracted from a valid SEG is defined as follows.
  We annotate each node that has more than one incoming edge with a unique procedure identifier.
  Then, for every node annotated with an identifier, say $X$, we replace each of its incoming edges
  with an edge to a new leaf node that contains a special term $X$ (so now the node annotated with
  $X$ has no incoming edges).
  This eliminates all loops in the SEG.

  We now extract a pair $\langle\mathcal D,C\rangle$ as follows, where for $C$ we start at the
  initial network and for each procedure $X$ we start from the node identified with $X$:
  (i)~if the node is a leaf labeled $\nil$, we return $\nil$;
  (ii)~if the node is a leaf labeled $X$, we return $X$;
  (iii)~if the node has one transition to $N'$ labeled $\eta$, we recursively extract $C'$ from $N'$
    and return $\eta;C'$;
  (iv)~if the node has two transitions to $N_1$ and $N_2$ labeled $\pid p.e.\m{then}$ and $\pid
    p.e.\m{else}$, we recursively extract $C_1$ from $N_1$ and $C_2$ from $N_2$ and return
    $\cond{p.e}{C_1}{C_2}$.
\end{definition}
Note that there can be no other leaves in a valid SEG.

We illustrate these notions with an example from~\cite{ourFOSSACSpaper}.
\begin{example}
Consider the following network $N$.
\begin{align*}
  & \actor{\pid p}{}{{}\rec{X}{\bsend{\pid q}{\ast};X}{X}}
  \ \parp\ \actor{\pid q}{}{{}\rec{Y}{\brecv{\pid p};Y}{Y}} \\
  \nonumber
  \parp \ &\actor{\pid r}{}{{}\rec{Z}{\bsend{\pid s}{\ast};Z}{Z}}
  \ \parp\ \actor{\pid s}{}{{}\rec{W}{\brecv{\pid r};W}{W}}
\end{align*}

  This network has the following two SEGs:
  \[\xymatrix@R=1em{
    \actor{\pid p}{}{X^\circ}\parp\actor{\pid q}{}{Y^\circ}\parp\actor{\pid r}{}{Z^\circ}\parp\actor{\pid s}{}{W^\circ}
    \ar@/^/[d]^{\com{\pid p.\ast}{\pid q}}
    &&
    \actor{\pid p}{}{X^\circ}\parp\actor{\pid q}{}{Y^\circ}\parp\actor{\pid r}{}{Z^\circ}\parp\actor{\pid s}{}{W^\circ}
    \ar@/^/[d]^{\com{\pid r.\ast}{\pid s}}
    \\
    \actor{\pid p}{}{X^\bullet}\parp\actor{\pid q}{}{Y^\bullet}\parp\actor{\pid r}{}{Z^\circ}\parp\actor{\pid 
s}{}{W^\circ}
    \ar@/^/[u]^{\com{\pid r.\ast}{\pid s}}
    &&
    \actor{\pid p}{}{X^\circ}\parp\actor{\pid q}{}{Y^\circ}\parp\actor{\pid r}{}{Z^\bullet}\parp\actor{\pid 
s}{}{W^\bullet}
    \ar@/^/[u]^{\com{\pid p.\ast}{\pid q}}
  }\]
  Observe that the self-loops are discarded because they do not go through a node with all $\circ$ annotations.
  From these SEGs, we can extract two definitions for $X$:
  \[
  \rec{X}{\com{\pid p.\ast}{\pid q};\com{\pid r.\ast}{\pid s};X}{X}
  \qquad\mbox{ and }\qquad
  \rec{X}{\com{\pid r.\ast}{\pid s};\com{\pid p.\ast}{\pid q};X}{X}
  \]
  
  Both of these definitions correctly capture all behaviors of the network.\qed
\end{example}

\section{Implementation}


\subsection{Strategy}
We implemented the extraction algorithm in a depth-first manner, building the SEG as a \m{static}
object of type \code{Graph} that is accessed by different methods.
After an initial setup that creates a graph with a single node (the initial network), we call a
method \code{buildGraph} on this node.
This method builds a list of all actions that the network can execute, takes the first action in
this list, and tries to complete the SEG assuming that that is the first action executed, returning
\code{true} if this succeeds.
If this process fails, the next action in the list is considered.
If no action leads to success, \code{buildGraph} returns \code{false}.

Actions are processed by two different methods, depending on their type.
In the case of communications, \code{buildCommunication} computes the network resulting from
executing the action, and checks whether there exists a node in the graph containing it.
If this is the case, it checks whether adding an edge to that node creates a valid loop; if so, the
edge is added and the method returns \code{true}; otherwise, the method returns \code{false}.
If no such node exists, a fresh node is added with an edge to it from the current node, and
\code{buildGraph} is called on the newly created node.

The case of conditionals is more involved, since two branches need to be created successfully.
Method \code{buildConditional} starts by treating the \code{then} case, much as described above,
except that in case of success (by closing a loop or by building a new node and receiving
\code{true} from the recursive invocation of \code{buildGraph}) it does not return, but moves to the
\code{else} branch.
If this branch also succeeds, the method returns \code{true}; if it fails, then it returns
\code{false} and deletes all edges and nodes created in the \code{then} branch from the graph: this
step is essential for soundness of the method deciding loop validity (see below).

If the main call to \code{buildGraph} returns \code{true}, the graph is a SEG for the given network.
Method \code{unrollGraph} is now called to identify and split nodes corresponding to procedure calls.
Then we extract the main choreography and all procedure definitions from the relevant nodes
recursively by reading the SEG as an AST (method \code{buildChoreographyBody}).

\subsection{Recognizing bad loops}

The most critical part of extraction is deciding when a loop can be closed.
The criteria from~\cite{ourFOSSACSpaper} require all paths in the loop to include a node where the
marking is erased.
Checking this directly is extremely inefficient, as it requires retraversing a large part of the
graph; instead, we followed a strategy that does this in time at most linear on the size of the graph.

\paragraph{Marking.}
For simplicity, we mark processes instead of procedure calls.
This is a trivial change from~\cite{ourFOSSACSpaper}: initially all processes are unmarked, and
whenever a process reduces it becomes marked.
If a transition marks all unmarked processes in the network, then the marking is
erased.\footnote{See Sect.~\ref{sec:impl-services} for a later refinement of this.}
Apart from the easier bookkeeping, this approach also provides a uniform treatment of processes with
finite behaviour: since they are initially unmarked, their execution must be completed before any
loops can be entered.
This is a stronger requirement than the original, whose implications we discuss below.

\begin{lemma}
  If a loop includes a node where the marking is erased, then the loop is valid in the sense
  of~\cite{ourFOSSACSpaper}.
\end{lemma}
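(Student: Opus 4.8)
The statement relates two notions of marking: the paper's simplified one (marking \emph{processes}, where a process becomes marked whenever it reduces) and the original one from~\cite{ourFOSSACSpaper} (marking \emph{procedure calls}, where the calls introduced by unfolding get marked). The goal is to show that erasure of the process-marking on a loop implies validity in the original sense, i.e.\ that the loop contains a node whose procedure calls are all unmarked (annotated $\circ$). The plan is to connect the two markings by tracking, along any path, the relationship between "process $\pid p$ is unmarked" and "the procedure calls at the head of $\pid p$'s behaviour, if any, are unmarked."

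First I would set up the key invariant: along any path in the SEG, if a process $\pid p$ is unmarked (in the process-marking) at some node $N$, then every procedure call occurring at the head of $\pid p$'s main behaviour in $N$ is unmarked (in the original procedure-call marking). The reason is that a procedure call at the head of $\pid p$ becomes marked only by an unfolding step involving $\pid p$, and any such step is a reduction of $\pid p$, which simultaneously marks the process $\pid p$. Conversely, once $\pid p$ is unmarked again it can only be because a global erasure occurred, and erasure in the original system clears \emph{all} procedure-call markings, so in particular those at the head of $\pid p$. (One must be slightly careful that the two erasure conditions are triggered at the same transitions; with the simplified rule, erasure happens when a transition marks all remaining unmarked processes, and I would argue this occurs no later than the original erasure, so that if anything the process-marking is erased at least as often — this only helps.)

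Second, I would use the invariant at the erasure node. Suppose the loop contains a node $N$ where the process-marking is erased, so immediately after that transition \emph{all} processes in $N$ are unmarked. By the invariant, at $N$ every procedure call at the head of every process's behaviour is unmarked. But in the original system, the procedure-call marking on a given call persists only while that call sits at the head of its process (once a call is consumed by unfolding, its descendants are marked; a non-head call is never marked in the first place, since only unfoldings introduce marks and those happen at the head). Hence at $N$ there are no marked procedure calls at all: every procedure call in $N$ is annotated $\circ$. That is exactly the original definition of a node witnessing loop validity, so the loop is valid in the sense of~\cite{ourFOSSACSpaper}.

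The main obstacle I anticipate is making the bridging invariant airtight, in particular handling the mismatch between the two erasure conditions and the treatment of processes with \emph{finite} behaviour. In the simplified system such processes start unmarked and must reduce before any loop is entered, so on a loop they are already "done" and carry no procedure calls — they contribute nothing marked, consistent with the invariant. In the original system such processes simply never have marked calls. The delicate point is the timing of erasure: I would argue that whenever the original rule erases the procedure-call marking, the simplified rule has already erased (or simultaneously erases) the process-marking, because a process whose only remaining calls are all marked has necessarily reduced since the last erasure. Establishing this monotone relationship between the two erasure events, and hence that a simplified-erasure node is always an original-all-$\circ$ node, is the crux; the rest is the straightforward structural bookkeeping sketched above.
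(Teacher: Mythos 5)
There is a genuine gap. Your bridging invariant --- ``if $\pid p$ is process-unmarked at $N$ then the calls at the head of $\pid p$ are unmarked'' --- and the conclusion you draw from it --- that the node where the process-marking is erased has \emph{every} procedure call annotated $\circ$ --- are both false. The two erasure rules fire on different conditions (all \emph{processes} have reduced vs.\ all \emph{calls} are marked), and the problematic case is exactly the one you dismiss with ``this only helps'': the process-marking can be erased strictly \emph{before} the call-marking. Concretely, take $\pid p,\pid q$ with $X=\bsend{\pid q}{e_1};\bsend{\pid q}{e_2};X$, $Y=\brecv{\pid p};\brecv{\pid p};Y$, alongside $\pid r,\pid s$ with $Z=\bsend{\pid s}{e};Z$, $W=\brecv{\pid r};W$. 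From a node where all calls are $\circ$, all processes are unmarked, and $\pid p$ sits mid-body at $\bsend{\pid q}{e_2};X^\circ$, one communication from $\pid p$ to $\pid q$ (no unfolding) followed by one from $\pid r$ to $\pid s$ (unfolding $Z,W$) marks all four processes, so the process-marking is erased; yet the resulting node carries $Z^\bullet,W^\bullet$ next to $X^\circ,Y^\circ$, so the call-marking is not erased, a marked call sits at the head of the now-unmarked $\pid r$, and the node is not all-$\circ$. The same example refutes your auxiliary claim that ``a non-head call is never marked'': unfolding marks the calls \emph{introduced by} the unfolding, and these sit inside the unfolded body (e.g.\ $\bsend{\pid q}{e_2};X^\bullet$), not at the head.

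The lemma survives, but not by identifying the process-erasure node with an all-$\circ$ node. The paper argues at the level of processes rather than calls: erasure at a node of the loop forces every process to reduce when going around the cycle, which is the semantic content of validity; the only remaining obligation concerns processes with finite behaviour, and the paper shows these cannot occur non-terminated on the loop at all, since they would have to reduce for the marking to be erased but a reduction strictly decreases their size, so they cannot return to themselves. Your sketch asserts that finite processes are ``already done'' without this size argument, which is the second half of the paper's proof. If you insist on a call-level argument, the all-$\circ$ witness must be sought at a \emph{different} node of the loop (every $\circ$ call is eventually consumed because its process reduces, so a call-erasure does occur somewhere on the cycle); that is a workable but more delicate argument than the one you propose.
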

\begin{proof}
  We show that the loop includes a node where all processes with recursive calls are unmarked, and
  that it starts in a node where all processes with finite behaviour are deadlocked.
  The first condition is immediate; for the second, we observe that no processes with finite
  behaviour can occur (at all), since they must reduce (otherwise the marking would not be erased)
  but cannot reduce (even in many steps) to themselves (since reduction strictly reduces their size
  in the absence of procedure calls).
  \qed
\end{proof}

The converse does not hold: some networks that are extractable with the algorithm
from~\cite{ourFOSSACSpaper} become unextractable with this implementation of marking.

\begin{example}
  Consider the network
  \begin{align*}
    \actor{\pid p&}{\rec{X}{\cond{e}{\left(\bsel{\pid q}{\lleft};X\right)}{\left(\bsel{\pid q}{\lright};\bsend{\pid r}e\right)}}X} \\
    \parp\actor{\pid q&}{\rec{X}{\bbranch{\pid p}{\lleft:X, \lright:\nil}}X} \qquad
    \parp\actor{\pid r}{\brecv{\pid p}}
  \end{align*}
  This network is extractable with the conditions from~\cite{ourFOSSACSpaper} ($\pid r$ is
  deadlocked throughout the whole loop), but not with ours.
  However, note that in the extracted choregraphy $\pid r$ may be livelocked, in case $\pid p$
  always chooses the \code{then} branch of its conditional.

  This is made even more obvious if we consider the network
  \begin{align*}
    \actor{\pid p&}{\rec{X}{\cond{e}{\left(\bsel{\pid q}{\lleft};X\right)}{\left(\bsel{\pid q}{\lright};X\right)}}X} \\
    \parp\actor{\pid q&}{\rec{X}{\bbranch{\pid p}{\lleft:X, \lright:X}}X} \qquad
    \parp\actor{\pid r}{\brecv{\pid p}}
  \end{align*}
  where $\pid r$ does not appear in the extracted choreography.
\end{example}
In order to avoid accidental livelocks, we opted not to allow these networks to be extractable.
In Sect.~\ref{sec:impl-services} we show how to remove this restriction by \emph{explicitly}
allowing particular processes to be livelocked in extracted choreographies.

\paragraph{List of bad nodes.}
Rather than computing all loops that would be created when adding an edge to the graph, we instead
keep a list of \emph{bad nodes} at each node in the graph: the nodes for which adding an edge from
the current node would produce an invalid loop.
This list can be easily updated whenever a new node is created, since the graph is built depth
first: the initial node is the only one in its list of bad nodes; if $n'$ is created initially as
target of a new edge from $n$, then (a) if the transition from $n$ to $n'$ causes the marking to be
erased, the list of bad nodes at $n'$ is simply $\{n'\}$, otherwise (b) the list of bad nodes at
$n'$ is obtained by adding $n'$ to the list of bad nodes at $n$.

When attempting to add an edge between $n'$ and $n$, we first check whether the marking is erased
in the action labeling this edge; if this happens, the loop is valid.
Otherwise, we check whether $n$ is in the list of bad nodes of $n'$, in which case we reject the
loop.

\paragraph{Choice paths.}
Soundness of the strategy described above relies on the fact that no new edges are added between
existing nodes that make it possible to close a loop bypassing the edge where the marking was
erased.
This is guaranteed when a communication action is selected (the corresponding node only has one
outgoing transition), but not in the case of conditionals -- and indeed problems can occur in
such a situation, as illustrated in Figure~\ref{fig:badloop-conditional}.

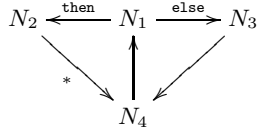
\begin{figure}
  \centering
  \[
  \xymatrix{
    N_2 \ar[dr]_{\ast} & N_1 \ar[l]_{\code{then}} \ar[r]^{\code{else}} & N_3 \ar[dl] \\
    & N_4 \ar[u]
  }
  \]

  \caption{An invalid loop that is not detected by the list of bad nodes. Extraction starts at
    $N_1$, where a conditional action is executed. In the \code{then} case, the network evolves
    first to $N_2$, then to $N_4$ through an action that erases the marking; at this stage, $N_1$ is
    not in $N_4$'s list of bad nodes, and $N_4$ can move to $N_1$ closing the loop.
    However, in the \code{else} case network $N_1$ evolves to $N_3$, which then moves to $N_4$
    without erasing the marking; however, $N_4$ is not in $N_3$'s list of bad nodes: this list was
    created from $N_1$'s, and since $N_1$ was created before $N_4$ it cannot contain $N_4$.}
  \label{fig:badloop-conditional}
\end{figure}

To avoid this problem, we restrict edge creation so that we can only add an edge to an existing node
if that node is a predecessor of the current node (i.e., it was not generated while expanding a
different branch of a conditional statement).
We do this by annotating each network with a \emph{choice path} representing the sequence of
conditional branches on which it depends.
The initial node has empty choice path; nodes generated from a communication action inherit their
parent's choice path; and nodes generated from a conditional get their parent's choice path appended
with $0$ (\code{then} branch) or $1$ (\code{else} branch).

When checking whether a node with the target network already exists in the graph, we additionally
require it to be in a node whose choice sequence is a prefix of the current node's; otherwise, we
proceed as if no such node exists.

\begin{lemma}
  \label{lem:badnodes}
  Let $n$ and $n'$ be nodes such that the choice path of $n$ is a prefix of the choice path of $n'$
  and $n'$ is the node currently being expanded.
  Adding an edge from $n'$ to $n$ creates an invalid loop iff $n$ is in the list of bad nodes of
  $n'$.
\end{lemma}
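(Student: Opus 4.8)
The plan is to replace the global-sounding condition ``adding an edge from $n'$ to $n$ creates an invalid loop'' (where \emph{invalid} means: the loop contains no node at which the marking is erased) by a condition on a single path of the depth-first search tree. Throughout I view the graph built so far as a spanning tree — the edges along which each node was first created — together with a set of \emph{back-edges}; the node $n'$ being expanded sits at the bottom of the current search stack. I will also assume that the action on the candidate edge $n'\to n$ does not itself erase the marking, since otherwise the list of bad nodes is not even consulted and the loop is accepted directly; so the lemma is really being asked under that hypothesis.

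First I would record the invariant behind the incremental maintenance of the bad-node lists: for every node $m$ still present in the graph, the list of bad nodes of $m$ is precisely the segment of the tree-path from the root to $m$ that starts at the target of the last marking-erasing tree-edge on that path (or at the root, if there is none) and ends at $m$. This follows by induction on the construction steps from the base case (the root's list is the singleton containing the root) and the two clauses of the update rule; in particular every such list consists only of tree-ancestors of its node. Next I would use the hypothesis that the choice path of $n$ is a prefix of that of $n'$ to conclude that $n$ is a tree-ancestor of $n'$: the nodes still present are those on the stack and those in already-completed sibling subtrees, and every such subtree hangs off some conditional $v$ on the stack as its \code{then}-child while the stack descends into $v$'s \code{else}-child, so every node inside it has choice path extending $v$'s with a $0$ whereas $n'$'s extends $v$'s with a $1$; hence $n$ must lie on the stack. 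The same argument, applied when each back-edge was created, shows that every back-edge points to a tree-ancestor of its source. Write $m_0=n,m_1,\dots,m_k=n'$ for the nodes on the tree-path from $n$ to $n'$, write $e_i:m_{i-1}\to m_i$ for its edges and $f$ for the candidate edge $n'\to n$, and let $C$ be the loop that runs $e_1,\dots,e_k$ and then $f$.

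The crucial step is to show that every loop created by adding $f$ uses all of the edges of $C$. For this I would observe that, because back-edges point to ancestors, the only edge entering the subtree rooted at any node $m$ is the tree-edge into $m$; in particular the only edge entering $n'$ is $e_k$. Since every new loop traverses $f$, and $f$ leaves the subtree rooted at $m_i$ (because $n$ is not a descendant of $m_i$) for each $1\le i\le k$, such a loop must also enter that subtree, hence traverse $e_i$. Therefore a new loop meets no erasing edge exactly when none of $e_1,\dots,e_k$ (and, by our standing assumption, $f$) erases the marking, i.e.\ exactly when $C$ itself does not; so an invalid loop is created iff $C$ is invalid (for the right-to-left part of this, note that $C$ is itself one of the loops created). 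It then remains only to match this against the invariant of the first step: by that invariant, $n$ belongs to the list of bad nodes of $n'$ iff no marking-erasing tree-edge occurs among $e_1,\dots,e_k$, which is exactly the condition that $C$ is invalid. This yields the equivalence in both directions.

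The main obstacle is the crucial step: ruling out that $f$ closes some roundabout invalid loop distinct from $C$. This is exactly the unsoundness that the choice-path annotation was introduced to prevent (cf.\ Figure~\ref{fig:badloop-conditional}), and making it rigorous rests entirely on the structural picture assembled above — present nodes lie on the stack or in completed \code{then}-subtrees, back-edges go to tree-ancestors, and each subtree has a unique entry edge. By comparison, the bad-nodes invariant and the concluding chain of equivalences are routine once that picture is established.
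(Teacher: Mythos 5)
Your proof is correct and follows essentially the same route as the paper's: the choice-path prefix condition forces $n$ to lie on the unique (tree-)path to $n'$, and the bad-nodes list is exactly the suffix of that path since the last marking erasure, so membership of $n$ in it characterises invalidity of the closed loop. You are somewhat more careful than the paper on two points it glosses over — explicitly ruling out roundabout loops through other back-edges via the unique-entry-edge argument, and noting that the equivalence presupposes the candidate edge does not itself erase the marking — but these are refinements of the same argument, not a different approach.
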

\begin{proof}
  Assume that adding an edge from $n'$ to $n$ creates an invalid loop.
  By construction, there is only one path between $n$ and $n'$: if any node between $n$ and $n'$ has
  more than one descendant, then these descendants have different choice paths, and only one of them
  can be an ancestor of $n'$.

  We show that $n$ must be in the list of bad nodes of $n'$ by induction.
  By construction, $n$ is in its own list of bad nodes.
  In the path from $n$ to $n'$, the list of bad nodes is always built extending that in the previous
  node (since the marking is not erased), so $n$ must be in the list of bad nodes of $n'$.

  Conversely, suppose that $n$ is in the list of bad nodes of $n'$.
  Then the marking has not been erased in the only path from $n$ to $n'$, thus the loop is invalid.
  \qed
\end{proof}

\subsection{Data structures}
We now summarize the data structures that we use in order to implement the functionalities described
above.

\begin{itemize}
\item \code{ConcreteNode}: the nodes created by \code{buildGraph}.
  They include a network, its choice path (a binary sequence), a unique identifier (\code{int}), a
  list of identifiers of bad nodes, and a marking (mapping process names to \code{boolean}).
\item \code{InvocationNode}: the procedure invocation nodes added when converting the SEG to a
  choreography.
  They include a procedure name and a reference to the \code{ConcreteNode} where the procedure
  starts.
\item \code{nodeHashes}: a global variable mapping each network and marking to the set of nodes with
  that network and marking.
\item \code{choicePaths}: a global variable mapping each choice path to the collection of nodes with
  exactly that choice path.
\end{itemize}

\code{ConcreteNode}s offer no surprises, as all their components have been described above.

\code{InvocationNode}s contain a reference to the procedure definition in order to make the final
step easier: method \code{unrollGraph}, which creates these nodes, returns a list of all
\code{InvocationNode}s, which is then passed on to \code{buildChoreography}.
This method iterates through this list, and constructs the different procedure definitions by
reading their name from the node and calling \code{buildChoreographyBody} on the \code{ConcreteNode}
it points to.

Global variables \code{nodeHashes} and \code{choicePaths} are included to allow finding sets of
nodes with a given property efficiently.
Both \code{buildCommunication} and \code{buildConditional} need to determine whether some network(s)
are in the graph, and \code{nodeHashes} answers this without iterating through all nodes.
Variable \code{choicePaths} is used in \code{buildConditional} in the case where construction of the
\code{then} branch succeeds but that of the \code{else} branch fails: all nodes constructed in the
\code{then} branch need to be removed from the graph, and they can be quickly retrieved via their
choice paths.

\section{Extensions and optimizations}

The next step was to implement a number of extensions and optimizations to the original algorithm.
Some of these changes aim at making the implementation more efficient in situations that we consider
may occur often enough to warrant consideration; others extend the domain of extractable
choreographies, and were motivated by practical applications.

\subsection{Parallelization}
\label{sec:impl-parallel}

In our first testing phase, we took the benchmarks from~\cite{LTY15} and wrote them as networks.
This translation was done by hand, and the goal was to ensure that the network represented the same
protocol as the communicating automata in the original work.
Of these, 3 benchmarks (\emph{alternating 2-bit}, \emph{alternating 3-bit} and \emph{TPMContract})
were not implemented.
The first uses asynchronous communication, an extension described in~\cite{ourFOSSACSpaper}
that is not implemented; the two last require non-determinism, and are not
representable in our formalism.

Several benchmarks are parallel compositions of two instances of the same network.
They exhibit a very high degree of parallelism, visibly slowing down extraction.
However, very simple static analysis can easily improve performance in such instances.
We define the network's communication graph as the undirected graph whose nodes are processes, and
where there is an edge between $\pid p$ and $\pid q$ if they ever interact.
The connected components of this graph can be extracted independently, and the choreographies
obtained composed in parallel at the end.

Theoretically, this requires adding a parallel composition constructor at the top level of a
choreography, which is straightforward.
In practice, this trivial preprocessing drastically reduces the computation time: for the (very
small) benchmarks from~\cite{LTY15}, doubling the size of the network already corresponds to a
increase in computation time of up to $35$ times, while splitting the network in two and extracting
each component in sequence keeps this factor under $2$, since the independent components can be
extracted in parallel.

We report our empirical evaluation in Table~\ref{tab:nobuko}.
These numbers are purely indicative: due to the very small size of these examples, we did not
attempt to make a very precise evaluation.
We measured the extraction time for all benchmarks, and compute the ratio between each benchmark
containing a duplicate network and the non-duplicated one.
This was done with the original, sequential, algorithm, and with the parallelized one.
All execution times were averaged over three runs.
The values themselves are not directly comparable to those from~\cite{LTY15}, since the network
implementations are substantially different, but the ratios show the advantages of our approach:
even without parallelization, our ratios are substancially lower, in line with the better
asymptotical complexity of our method shown in~\cite{ourFOSSACSpaper}.
(Note that the examples from~\cite{LTY15} where the ratio is lowest are the smallest ones, where the
execution time is dominated by the setup and command-line invocation of the different programs
used.)

\begin{table}
  \centering
  \caption{Empirical evaluation of the effect of parallelizing extraction.}
  \label{tab:nobuko}

  \begin{tabular}{l@{\hspace{1em}}rrr@{\hspace{1em}}rrr@{\hspace{1em}}rrr}
    \toprule
    Test name & \multicolumn3c{sequential} & \multicolumn3c{parallel} & \multicolumn3c{from~\cite{LTY15}} \\
    & single & double & ratio & single & double & ratio & single & double & ratio \\ \midrule
    Bargain &
    1.7 & 10.0 & \bf5.88 &
    3.0 & 3.7 & \bf1.23 &
    103 & 161 & \bf1.56 \\
    Cloud system &
    8.3 & 83.0 & \bf10.0 &
    8.6 & 8.3 & \bf0.96 &
    140 & 432 & \bf3.08 \\
    Filter collaboration &
    4.0 & 123.3 & \bf30.83 &
    5.0 & 4.7 & \bf0.93 &
    118 & 178 & \bf1.51 \\
    Health system &
    6.0 & 80.3 & \bf13.39 &
    7.3 & 11.7 & \bf1.59 &
    17 & 1702 & \bf100.12 \\
    Logistic &
    1.0 & 34.7 & \bf34.70 &
    5.3 & 16.7 & \bf3.14 &
    276 & 2155 & \bf7.81 \\
    Running example &
    7.7 & 143.3 & \bf18.61 &
    5.7 & 6.7 & \bf1.17 &
    184 & 22307 & \bf121.23 \\
    Sanitary agency &
    6.0 & 61.0 & \bf10.17 &
    8.0 & 7.3 & \bf0.92 &
    241 & 3165 & \bf13.13 \\
    \bottomrule
  \end{tabular}
\end{table}

\subsection{Extraction strategies}
\label{sec:impl-strategy}

The performance of our implementation is depends on the choice of the network action, in cases where
there several possible options: expanding a communication generates one descendant node, but expanding
a conditional generates two descendant nodes that each need to be processed.
On the other hand, if the choreography contains cyclic behaviour, different choices of actions may
impact the size of the extracted loops (and thus also execution time).

In order to control these choices, we define \emph{execution strategies}: heuristics that guide the
choice of the next action to pick.
Strategies either take into account the syntactic type of the action (e.g.~prioritize interactions)
or the semantics of bad loops (prioritize unmarked processes), or combine them with different
priorities (prioritize unmarked processes and break ties by preferring interactions).
We also include a basic strategy that picks a random action.

All strategies are implemented in the same way: in \code{buildGraph}, the list of possible actions
that the network in the current node can execute is sorted according to the chosen criterium.

Our implementation includes the following strategies.
The abbreviations in parenthesis are used in captions of graphics.
\begin{description}
\item[\texttt{Random (R)}] Choose a random action.
\item[\texttt{LongestFirst (L)}] Prioritize the process with the longest body.
\item[\texttt{ShortestFirst (S)}] Prioritize the process with the shortest body.
\item[\texttt{InteractionsFirst (I)}] Prioritize interactions.
\item[\texttt{ConditionalsFirst (C)}] Prioritize conditionals.
\item[\texttt{UnmarkedFirst (U)}] Prioritize actions involving unmarked proceses.
\item[\texttt{UnmarkedThenInteractions (UI)}] Prioritize actions involving unmarked processes, and
  as secondary criterium prioritize interactions.
\item[\texttt{UnmarkedThenSelections (US)}] Prioritize unmarked processes, as a
  secondary criterium prioritize selections, and afterwards value communications.
\item[\texttt{UnmarkedThenConditionals (UC)}] Prioritize unmarked processes, and
  as secondary criterium prioritize conditionals.
\item[\texttt{UnmarkedThenRandom (UR)}] Prioritize unmarked processes, in random
  order.
\end{description}

We remark that \texttt{UnmarkedFirst} and \texttt{UnmarkedThenRandom} are different strategies:
\texttt{UnmarkedFirst} does not distinguish among actions involving unmarked processes, so they come
in the order of the processes involved in the network.
By contrast, \texttt{UnmarkedThenRandom} explicitly randomizes the list of possible actions, in
principle contributing towards more fairness among processes.

The experimental results in the next section show that \texttt{LongestFirst} and
\texttt{ShortestFirst} perform significantly worse than all other strategies, while \texttt{Random}
and \texttt{UnmarkedFirst} in general give the best results.

\subsection{Well-formedness check}
\label{sec:impl-wellformed}

In~\cite{ourFOSSACSpaper}, we assume networks to be well-formed.
While networks that are not well-formed are not extractable (some processes are deadlocked), this
may still take a long time to detect.
Therefore, we added an initial check that the network we are trying to extract is well-formed, and
immediately fail in case it is not.
Having this check also allows us to assume that the network is well-formed throughout the remainder
of execution.

\subsection{Livelocks}
\label{sec:impl-services}

Several examples in~\cite{LTY15} include processes that offer a service, and as such may be inactive
throughout a part (or the whole) of execution.
The algorithm in~\cite{ourFOSSACSpaper} does not allow this behaviour: when a loop is closed, every
process must either be terminated or reduce inside the loop.
In order to allow for services, we added a parameter to the extraction method containing a list of
services.
These processes are marked initially, and are not unmarked when the marking is erased.
As such, they may be unused when a loop is closed.

\subsection{Clever backtracking}

Our strategy of building the SEG in a depth-first fashion requires that, on failure, we backtrack
and explore different possible actions.
This leads to a worst-case behaviour where all possible execution paths need to be explored, in the
case that no choreography can be extracted from the original network.
However, a closer look at why a particular branch leads to deadlock allows us to avoid backtracking
in some instances: network execution is confluent, so if we reach a deadlocked state then every
possible execution reaches such a state, and extraction must fail.
It is only when extraction fails because of attempting to close an invalid loop that backtracking is
required.

To implement this refinement, we changed the return type of all methods that try to build an edge of
the SEG to a tri-valued logic.
If a method succeeds, it returns \m{ok} (corresponding to \code{true}); if it fails due to reaching
a deadlock, it returns \m{fail} (corresponding to \code{false}); and if it fails due to trying to
close an invalid loop, it returns \m{badloop}.
In recursive calls, these values are treated as follows:
(i) if the caller is processing a communication or the \code{else} branch of a conditional, they
  are propagated upwards;
(ii) if the caller is processing the \code{then} branch of a conditional, \m{fail} and \m{badloop}
  are propagated upwards, while \m{ok} signals that the \code{else} branch can now be treated.

For \code{buildGraph}, a method call returning \m{ok} or \m{fail} is also propagated upwards, while
\m{badloop} signals that a different possible action should be tried.
If all possible actions return \m{badloop}, then \code{buildGraph} returns \m{fail}.
Although possibly unexpected, this is sound: due to confluence, any action that could have been
executed before that would make it possible to close a loop from this node can also be executed from
this node.

This optimization is crucial to get a practical implementation in the case of
\emph{unextractable} networks.
Most of the failure tests (Sect.~\ref{sec:testing-bad}) did not terminate before this
change, while they now fail in time comparable to that of success.

\section{Practical evaluation}

We test the performance our implementation with a three-stage plan.

In the first phase, we focused on the test cases from~\cite{LTY15}, in order to ensure that our tool
covered at least those cases.

In the second phase, we randomly generated 1050 choreographies and their endpoint projections by
varying four different parameters (see details below), and applied our tool to the projected networks.

In the third phase, we aimed at modelling the typical changes (correct or incorrect) introduced when
a programmer modifies an endpoint directly, and tried to extract choreographies from the resulting
networks.

We did \emph{not} attempt to generate networks directly.
Indeed, such tests are not very meaningful for two reasons: first, they do not correspond to a
realistic scenario; secondly, randomly generated networks are nearly always unextractable.
We believe our test suite to be comprehensive enough to model most situations with practical relevance.

All tests were performed on a computer running Arch Linux, kernel version 5.3.7, with an Intel Core
i7-4790K CPU and 32 GB RAM.

\subsection{Comparison with the literature}
Our first testing phase used the benchmarks from~\cite{LTY15} (see Sect.~\ref{sec:impl-parallel}).
These tests were done simply as a proof-of-concept, as their simplicity means that the measured
execution times are extremely imprecise.
The results (using strategy \texttt{InteractionsFirst}) were reported in Table~\ref{tab:nobuko}.

\subsection{Reverse EPP}
In the second phase, we generated large-scale tests to check the scalability of our implementation.
Our tests consist of randomly-generated choreographies characterized by four parameters: number of
processes, total number of actions, number of those actions that should be conditionals, and a number
of procedures.

We then generate ten choreographies for each set of parameters as follows: first, we determine how
many actions and conditionals each procedure definition (including \code{main}) should have, by
uniformly partitioning the total number of actions and conditionals.
Then we generate the choreography sequentially by randomly choosing the type of the next action so
that the probability distribution of conditional actions within each procedure body is uniform.
For each action, we randomly choose the process(es) involved, again with uniform distribution, and
assign fresh values to any expression or label involved.
At the end, we randomly choose whether to end with termination or a procedure call.

This method may generate choreographies with dead code (if some procedures are never called).
Therefore there is a post-check that determines whether every procedure is reachable from
\code{main} (possibly dependent on the results of some conditional actions); if this is not the
case, the choreography is rejected, and a new one is generated.

A randomly generated choreography with conditional actions is typically unprojectable, so we amend
it (see~\cite{ourTCSpaper}) to make it projectable.
In general, this increases the size of the choreography.
Finally, we apply EPP to obtain the networks for our second test suite.

\begin{table}
  \caption{Parameters for the choreographies generated for testing.}
  \label{tab:chor-params}

  \begin{tabular}{ccrrrrr}
    \toprule 
    Test set & parameter & size & processes & \code{if}s & \code{def}s & \#\ tests \\ \midrule
    size & $k\in[1..42]$ & $50k$ & $6$ & $0$ & $0$ & $420$ \\
    processes & $k\in[1..20]$ & $500$ & 5$k$ & $0$ & $0$ & $200$ \\
    ifs (finite) & $k\in[1..4]$ & $50$ & $6$ & $10k$ & 0 & $40$ \\
    ifs (varying procedures) & $\tuple{j,k}\in[0..5]\times[0..3]$ & $200$ & $5$ & $j$ & 5$k$ & $240$ \\
    procedures (fixed ifs) & $k\in[1..15]$ & $20$ & $5$ & $8$ & $k$ & $150$ \\ \midrule
    total &&&&&& $1050$ \\ \bottomrule
  \end{tabular}
\end{table}

The parameters for generation are given in Table~\ref{tab:chor-params}.
The upper bounds were determined by the limits of what we could extract with our hardware
limitations.
Four of the generated files contained tests that were too large to extract, and they were removed from
the final test set.

\paragraph{Results.}
For space reasons, we only report on the most interesting tests.
The first test shows that, predictably, for choreographies consisting of only communications, the
extraction time is nearly directly proportional\footnote{with a small overhead from needing to work
  with larger objects} to the network size, except for strategies that need to compute the size of
each process term.
We could enrich the networks with this information in order to make these strategies more efficient,
but since they perform badly in general we did not pursue this approach.

  

The second test is similar, but varying the number of processes (which makes for a greater number of
possible actions at each step) while keeping the size constant.
Our results show that indeed execution time grows linearly with the number of processes for
\texttt{InteractionFirst} and \texttt{Random}.
The behaviour of \texttt{LongestFirst} and \texttt{ShortestFirst} is more interesting, as the price
of computing the length of the behaviours dominates for small numbers of processes.

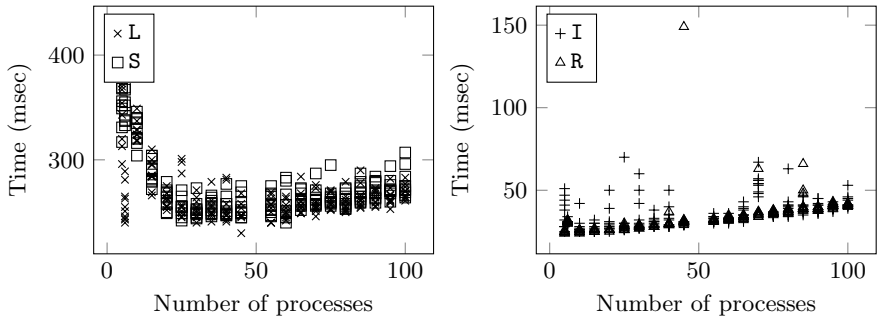
\begin{figure}
  \begin{tikzpicture}
    \begin{axis}[
        width=.5\textwidth,
        height=15em,
        xlabel=Number of processes,
        ylabel=Time (msec),
        legend pos=north west
      ]
      \addplot[only marks,mark=x]
      table[x=numberOfProcesses,y=time(msec)-LongestFirst,col sep=tab]{stats/stats-increasing-processes.csv};
      \addlegendentry{\texttt{L}}
      \addplot[only marks,mark=square]
      table[x=numberOfProcesses,y=time(msec)-ShortestFirst,col sep=tab]{stats/stats-increasing-processes.csv};
      \addlegendentry{\texttt{S}}
    \end{axis}
  \end{tikzpicture}
  \begin{tikzpicture}
    \begin{axis}[
        width=.5\textwidth,
        height=15em,
        xlabel=Number of processes,
        ylabel=Time (msec),
        legend pos=north west
      ]
      \addplot[only marks,mark=+]
      table[x=numberOfProcesses,y=time(msec)-InteractionsFirst,col sep=tab]{stats/stats-increasing-processes.csv};
      \addlegendentry{\texttt{I}}
      \addplot[only marks,mark=triangle]
      table[x=numberOfProcesses,y=time(msec)-Random,col sep=tab]{stats/stats-increasing-processes.csv};
      \addlegendentry{\texttt{R}}
    \end{axis}
  \end{tikzpicture}
  
  \caption{Execution time vs number of processes, when the total number of actions is constant.}
  \label{fig:results-test-2}
\end{figure}

The third test introduces conditionals.
Our results show that execution time varies with the \emph{total} number of conditionals in the
network, rather than with the number of conditionals in each process.
Figure~\ref{fig:results-test-3} (left) exhibits the worst-case exponential behaviour of our
algorithm, and also suggests that delaying conditionals is in general a better strategy.
Figure~\ref{fig:results-test-3} (right) shows the number of nodes created in the SEG, illustrating
that execution time is not directly proportional to this vale.

\begin{figure}
  \begin{tikzpicture}
    \begin{axis}[
        width=.5\textwidth,
        height=15em,
        xlabel=Total \#ifs in network,
        ylabel=Time (msec),
        legend pos=north west,
        scaled x ticks=false
      ]
      \addplot[only marks,mark=x]
      table[x=numberOfConditionals,y=time(msec)-LongestFirst,col sep=tab]{stats/stats-increasing-ifs-no-recursion.csv};
      \addlegendentry{\texttt{L}}
      \addplot[only marks,mark=square]
      table[x=numberOfConditionals,y=time(msec)-ConditionsFirst,col sep=tab]{stats/stats-increasing-ifs-no-recursion.csv};
      \addlegendentry{\texttt{C}}
      \addplot[only marks,mark=+]
      table[x=numberOfConditionals,y=time(msec)-InteractionsFirst,col sep=tab]{stats/stats-increasing-ifs-no-recursion.csv};
      \addlegendentry{\texttt{I}}
      \addplot[only marks,mark=triangle]
      table[x=numberOfConditionals,y=time(msec)-Random,col sep=tab]{stats/stats-increasing-ifs-no-recursion.csv};
      \addlegendentry{\texttt{R}}
    \end{axis}
  \end{tikzpicture}
  \begin{tikzpicture}
    \begin{axis}[
        width=.5\textwidth,
        height=15em,
        xlabel=Total \#ifs in network,
        ylabel=Nodes created,
        legend pos=north west,
        scaled x ticks=false
      ]
      \addplot[only marks,mark=x]
      table[x=numberOfConditionals,y=nodes-LongestFirst,col sep=tab]{stats/stats-increasing-ifs-no-recursion.csv};
      \addlegendentry{\texttt{L}}
      \addplot[only marks,mark=square]
      table[x=numberOfConditionals,y=nodes-ConditionsFirst,col sep=tab]{stats/stats-increasing-ifs-no-recursion.csv};
      \addlegendentry{\texttt{C}}
      \addplot[only marks,mark=+]
      table[x=numberOfConditionals,y=nodes-InteractionsFirst,col sep=tab]{stats/stats-increasing-ifs-no-recursion.csv};
      \addlegendentry{\texttt{I}}
      \addplot[only marks,mark=triangle]
      table[x=numberOfConditionals,y=nodes-Random,col sep=tab]{stats/stats-increasing-ifs-no-recursion.csv};
      \addlegendentry{\texttt{R}}
    \end{axis}
  \end{tikzpicture}
  
  \caption{Execution time (left) and number of nodes (right) vs total number of conditionals, for
    networks consisting only of conditionals.
    The omitted strategies essentially perform as \texttt{InteractionsFirst} or
    \texttt{ConditionalsFirst}.}
  \label{fig:results-test-3}
\end{figure}
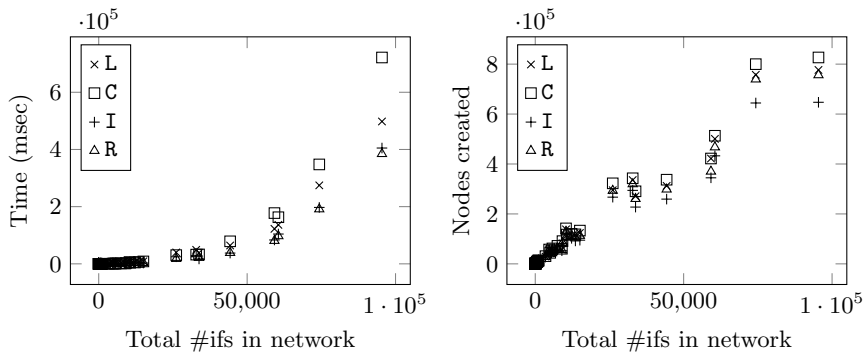

The results of the remaining tests are too complex to allow for immediate conclusions, and we omit them.

\paragraph{Correctness.}
In order to obtain confirmation of the correctness of our algorithm and its implementation, we
performed an additional verification at this point.
We implemented a naive similarity checker
that tests whether a choreography $C_1$ can simulate
another choreography $C_2$ as follows: we keep a set of pairs $R$, initially containing only the
pair \tuple{C_1,C_2}.
At each step, we choose a pair \tuple{C,C'} from $R$ and compute all actions $\alpha$ and
choreographies $C_\alpha$ such that $C$ can reach $C_\alpha$ by executing $\alpha$.
For each such action $\alpha$, we check that $C'$ can execute $\alpha$, compute the resulting
choreography $C'_\alpha$, and add the pair \tuple{C_\alpha,C'_\alpha} to $R$.
If $C'$ cannot execute $\alpha$, the checker returns \code{false}.
When all pairs in $R$ have been processed, the checker returns \code{true}.

We then check, for each test, that the original choreography and the one obtained by extraction
can simulate each other.

\begin{lemma}
  If $C_1$ and $C_2$ can simulate each other, then there is a bisimulation between $C_1$ and $C_2$.
\end{lemma}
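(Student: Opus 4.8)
The plan is to turn the two relations built by the checker into a single explicit bisimulation, the key enabling fact being that the choreography LTS is deterministic. Write $R_1$ for the set $R$ produced by running the checker on $\langle C_1,C_2\rangle$ and $R_2$ for the one produced on $\langle C_2,C_1\rangle$; by definition, ``$C_1$ and $C_2$ can simulate each other'' means precisely that both runs terminate and return \code{true}. The first step is to record the loop invariant of the checker: when it returns \code{true} on an input $\langle D_1,D_2\rangle$, the final set $R$ contains $\langle D_1,D_2\rangle$ and, for every $\langle C,C'\rangle\in R$ and every transition $C\xrightarrow{\alpha}C_\alpha$, there is a transition $C'\xrightarrow{\alpha}C'_\alpha$ with $\langle C_\alpha,C'_\alpha\rangle\in R$. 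This is immediate from the algorithm: a pair is declared processed only after each of its outgoing transitions has been matched and every matching successor pair has been inserted into $R$ (and hence is itself eventually processed), so on success $R$ is closed under the simulation clause.

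The crucial observation — and the reason mutual simulation is enough here, whereas in general simulation equivalence is strictly coarser than bisimilarity — is that the labelled transition system of choreographies is \emph{deterministic}: for every choreography $C$ and action $\alpha$ there is at most one $C'$, up to structural congruence, with $C\xrightarrow{\alpha}C'$. I would justify this directly from the operational semantics of~\cite{ourTCSpaper}: an action label records the processes it involves (and, for a conditional, which of the \code{then}/\code{else} branches is taken), out-of-order execution is permitted only between actions over disjoint sets of processes, and procedure unfolding is itself deterministic; hence two distinct reductions can never carry the same label. This is the strengthening, to unique successors, of the confluence property already used for the clever-backtracking optimisation.

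Given these two facts, the second step is to define
\[
  \mathcal B \;=\; \{\,\langle C,C'\rangle \;:\; \langle C,C'\rangle\in R_1 \ \text{ and }\ \langle C',C\rangle\in R_2\,\}
\]
and to verify that it is a bisimulation relating $C_1$ and $C_2$. That $\langle C_1,C_2\rangle\in\mathcal B$ follows from the invariant applied to both runs. For the forward clause, take $\langle C,C'\rangle\in\mathcal B$ and a transition $C\xrightarrow{\alpha}C_\alpha$; since $\langle C,C'\rangle\in R_1$ there is a transition $C'\xrightarrow{\alpha}C'_\alpha$ with $\langle C_\alpha,C'_\alpha\rangle\in R_1$; applying the invariant for $R_2$ to $\langle C',C\rangle\in R_2$ and the transition $C'\xrightarrow{\alpha}C'_\alpha$ yields a transition $C\xrightarrow{\alpha}\hat C$ with $\langle C'_\alpha,\hat C\rangle\in R_2$, and determinism forces $\hat C=C_\alpha$, so $\langle C_\alpha,C'_\alpha\rangle\in\mathcal B$. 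The backward clause is the mirror image, starting from a move of $C'$, using $\langle C',C\rangle\in R_2$ to match it and then $\langle C,C'\rangle\in R_1$ together with determinism to put the matched successor pair back into $\mathcal B$.

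I expect the determinism claim to be the main obstacle: it is exactly what makes the argument work, so the proof stands or falls on a careful inspection of the reduction rules of~\cite{ourTCSpaper} — in particular the interaction of the out-of-order (swap) rule with repeated identical actions, and the handling of procedure unfolding — to confirm that each pair of a choreography and an action label has a unique successor, possibly only up to structural congruence, in which case the construction of $\mathcal B$ and the statement of the lemma are to be read modulo that congruence. Everything else is routine bookkeeping: the checker's invariant is a direct consequence of its control flow, and the two diagram chases in the verification of $\mathcal B$ are short once determinism is in hand.
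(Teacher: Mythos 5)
Your proof is correct and takes essentially the same route as the paper's: both rest on the closure invariant of the checker's final set and on determinism of the choreography transition relation, which the paper invokes implicitly in the phrase ``there is a unique choreography $C'_\alpha$'' and which you rightly flag as the load-bearing assumption. The only cosmetic difference is that the paper proves the full equality $R_{12}=R_{21}^{-1}$ by induction and takes $R_{12}$ itself as the bisimulation, whereas you take the intersection $R_1\cap R_2^{-1}$ and verify the two clauses directly.
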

\begin{proof}
  We first observe that the final set $R$ computed by the algorithm is always the same, regardless
  of the order in which pairs are picked.
  
  Let $R_{12}$ and $R_{21}$ be the sets built when checking that $C_2$ simulates $C_1$ and that
  $C_2$ simulates $C_1$, respectively.
  We show by induction on the construction of $R_{12}$ that $R_{12}^{-1}\subseteq R_{21}$.
  Initially this holds, since $R_{12}=\{\tuple{C_1,C_2}\}$ and \tuple{C_2,C_1} is initially in
  $R_{21}$.
  Suppose $\tuple{C,C'}\in R_{12}$ is selected for processing.
  By induction hypothesis, $\tuple{C',C}\in R_{21}$.
  For every $\alpha$ such that $C$ can execute $\alpha$ and move to $C_\alpha$, there is a unique
  choreography $C'_\alpha$ such that $C'$ can execute $\alpha$ and move to $C'_\alpha$.
  Therefore, in the step where \tuple{C',C} is selected from $R_{21}$, every such pair
  \tuple{C'_\alpha,C_\alpha} is added to $R_{21}$, hence it is in the final set.
  Thus after extending $R_{12}$ with all the pairs obtained from \tuple{C,C'} the thesis still holds.

  By reversing the roles of $C$ and $C'$, we also establish that $R_{21}^{-1}\subseteq R_{12}$.
  Therefore $R_{12}=R_{21}^{-1}$.
  It then follows straightforwardly that $R_{12}$ is a bisimulation between $C_1$ and $C_2$.
  \qed
\end{proof}

Given that bisimulation is in general undecidable and that we did not make any effort to make a
clever implementation, our program often runs out of resources without terminating.
Still, it finished in about 5\%\ of the tests (those of smaller size), always with a positive
result.

\subsection{Fuzzer and unroller}
\label{sec:testing-bad}

In the third testing phase, we changed the networks obtained by choreography projection using two
different methods.
The first method (the \emph{fuzzer}) applies transformations that are semantically wrong, and
typically result in unextractable networks (modelling programmer errors).
The second method (the \emph{unroller}) applies transformations that are semantically correct, and
result in networks that are bisimilar to the original and should be extractable (modelling
alternative implementations of the same protocol).

\paragraph{The fuzzer.}
For the fuzzer, we considered the following transformations: adding an action; removing an action; and
switching the order of two actions.
The first two always result in an unextractable network, whereas the latter may still give an
extractable network that possibly implements a different protocol.

Our fuzzer takes two parameters $d$ and $s$, randomly chooses one process in the network, deletes
$d$ actions in its definition and switches $s$ actions with the following one.
The probability distribution of deletions and swaps is uniform (all actions have the same
probability of being deleted of swapped).
We made the following conventions: deleting a conditional preserves only the the \code{then} branch;
deleting an offering preserves only the first branch offered; swapping a conditional or offering
with the next action switches it with the first action in the \code{then}/first branch; and swapping
the last action in a behaviour with the next one amounts to deleting that action.
Deleting a conditional results in an extractable network that implements a subprotocol of the
original one, while other deletions yield unextractable networks.
Swaps of communication actions may yield extractable networks, but with a different extracted
choreography; all other types of swaps break extractability.

We did not implement adding a random action, as this is covered in our tests: adding an unmatched
send from \m{p} to \m{q} can be seen as removing a receive at \m{q} from \m{p} from a choreography
that includes that additional communication.
We restricted fuzzing to one process only since in practice we can assume that endpoints are changed
one at a time.
We applied three different versions of fuzzing to all our networks: one swap; one deletion;
and two swaps and two deletions.
The results are summarized in Table~\ref{tab:fuzzing}.

\begin{table}
  \caption{Extracting fuzzed networks: for each strategy we report on the percentage of
    unextractable networks (\%) and the average time to fail in those cases (ms).}
  \label{tab:fuzzing}

  \centering
  \begin{tabular}{cc|rr|rr|rr|rr|rr|rr|rr|rr|rr|rr}
    \toprule
    && \multicolumn2{c|}{\texttt{R}} & \multicolumn2{c|}{\texttt{L}} & \multicolumn2{c|}{\texttt{S}} &
    \multicolumn2{c|}{\texttt{I}} & \multicolumn2{c|}{\texttt{C}} & \multicolumn2{c|}{\texttt{U}} &
    \multicolumn2{c|}{\texttt{UI}} & \multicolumn2{c|}{\texttt{US}} & \multicolumn2{c|}{\texttt{UC}} &
    \multicolumn2c{\texttt{UR}} \\
    $d$ & $s$ & \% & ms & \% & ms & \% & ms & \% & ms & \% & ms & \% & ms & \% & ms & \% & ms & \% & ms & \% & ms \\ \midrule
    0 & 1 & 48 & 158 & 44 & 1595 & 47 & 1667 & 47 & 203 & 46 & 283 & 48 & 219 & 46 & 218 & 47 & 218 & 45 & 258 & 45 & 215 \\
    1 & 0 & 99 & 330 & 99 & 2192 & 99 & 2177 & 99 & 297 & 99 & 525 & 99 & 306 & 99 & 300 & 99 & 301 & 99 & 338 & 99 & 260 \\
    2 & 2 & 100 & 195 & 100 & 1378 & 100 & 1581 & 100 & 158 & 100 & 181 & 100 & 124 & 100 & 120 & 100 & 115 & 100 & 131 & 100 & 133 \\ \bottomrule
  \end{tabular}
\end{table}

The differences in the percentages in the first row are due to memory running out in some cases.
It is interesting that the slowest strategies actually perform better.
In later rows most networks are unextractable; the interesting observation here is that strategies
prioritizing unmarked processes fail much faster.

\paragraph{The unroller.}
Projections of choregraphies are intuitively easy to extract because their recursive procedures are
all synchronized (they come from the same choreography).
In practice, this is not necessarily the case: programs often include ``loops-and-a-half'', where it
is up to the programmer to decide where to place the duplicate code; and sometimes procedure
definitions can be locally optimized.
For example: if $X=\com{\pid p.e}{\pid q};\com{\pid p.e'}{\pid q};\com{\pid p.e''}{\pid r};X$, then
in the extracted implementation of \pid{q} the definition of $X_{\pid q}$ can simply be
$\brecv{\pid p};X$.

Our unroller models these situations by choosing one process and randomly unfolding some procedures,
as well as shifting the closing point of some loops.
These transformations are always correct, so they should yield extractable networks, but the
extraction time may be larger and there may be higher chance for bad loops.
We generated approximately 250 tests, which we were all able to extract.
A detailed analysis of these results is left to an extended version of this paper.

\section{Conclusions}

We successfully implemented the choreography extraction algorithm from~\cite{ourFOSSACSpaper} and
showed is practical usability by testing it on a comprehensive test suit developed for this purpose.
Our evaluation also shows that the best performing strategy when the network is extractable is
\texttt{Random}, but when unextractable networks come into play there is an advantage on using
\texttt{UnmarkedFirst}.
We believe that extraction strategies have unexplored potential, but they will need to be much more
complex in order to have a pronounced impact on performance.

\paragraph{Acknowledgements.}
The authors were supported in part by the Independent Research Fund Denmark,
Natural Sciences, grant DFF-1323-00247.

\bibliographystyle{plain}
\bibliography{biblio}

\end{document}